	\newtheorem{theorem}{Theorem}
	\newtheorem{corollary}{Corollary}
	\newtheorem{example}{Example}
\title{A 12/7-approximation algorithm for the discrete Bamboo Garden Trimming problem}
\author{Martijn van Ee \thanks{Netherlands Defence Academy, Faculty of Military Sciences, Den Helder, The Netherlands, M.v.Ee.01@mindef.nl }}
\begin{document}
\maketitle

\begin{abstract}
We study the discrete Bamboo Garden Trimming problem (BGT), where we are given $n$ bamboos with different growth rates. At the end of each day, one can cut down one bamboo to height zero. The goal in BGT is to make a perpetual schedule of cuts such that the height of the tallest bamboo ever is minimized. Here, we improve the current best approximation guarantee by designing a $12/7$-approximation algorithm. This result is based on a reduction to the Pinwheel Scheduling problem. We show that a guarantee of $12/7$ is essentially the best we can hope for if our algorithm is based on this type of reduction.\\
\textbf{Keywords:} bamboo garden trimming, pinwheel scheduling, approximation algorithms
\end{abstract}

\section{Introduction}
In the discrete Bamboo Garden Trimming problem (BGT), we are given a garden with $n$ bamboos, where bamboo $i$ grows with a rate of $h_i$ per day. Initially, all bamboos have height zero. At the end of each day, one can cut down one bamboo to height zero. The goal in BGT is to make a perpetual schedule of cuts (trimming schedule) such that the height of the tallest bamboo ever is minimized. The problem was introduced by Gasieniec et al. \cite{GKLLMR17}. They also introduced a continuous version of BGT, in which one can cut a bamboo at any time. However, here the bamboos are located at vertices in a weighted graph, and time passes while traveling. The discrete and continuous BGT have, among others, applications in scheduling maintenance of machines. Gasieniec et al. gave a 2-approximation for discrete BGT. Recently, Della Croce \cite{DellaCroce2020} improved this to a guarantee of approximately 1.888. He also showed that the approximation guarantee of his algorithm converges to $12/7$ when $h_1 \gg \sum_i h_i$. In this paper, we will extend the approach of \cite{GKLLMR17} in order to obtain a $12/7$-approximation algorithm. This result is obtained by reducing BGT to the Pinwheel Scheduling problem (PS).

In the Pinwheel Scheduling problem \cite{Holteetal1989}, we are given $n$ jobs with integral periods $p_1\leq\ldots\leq p_n$. On each day, we can schedule at most one job. The goal is to design a perpetual schedule such that each job $i$ is scheduled at least once in any period of $p_i$ consecutive days, or to conclude that no such schedule exists. Holte et al. \cite{Holteetal1989} showed that the problem is contained in PSPACE, but no completeness result is known. The strongest hardness result is given by Jacobs and Longo \cite{JacobsLongo2014}, who showed that there is no (pseudo-)polynomial time algorithm for the Pinwheel Scheduling problem, unless Satisfiability can be solved by a randomized algorithm running in expected time $n^{O(\log{n}\log\log{n})}$. An important issue of PS, and other related problems like BGT, is the representation of a solution. Explicitly writing down the schedule may take exponential time and space. In this paper, we only consider periodic schedules that can be represented by numbers $o_i$ and $t_i$ for each $i$. In a periodic schedule, job $i$ is scheduled for the first time on day $o_i$, and then scheduled every $t_i$ days later. Our algorithm will be able to compute the $o_i$'s and $t_i$'s in polynomial time (Section \ref{sec:poly}).

It is quite easy to reduce PS to BGT. Given an instance of PS, we create an instance of BGT by creating a bamboo $i$ for each job $i$ with growth rate $h_i=1/p_i$. Now, there is a feasible schedule for PS if and only if there is a trimming schedule for BGT with maximum height 1. Hence, BGT is harder than PS, and the hardness result by \cite{JacobsLongo2014} also holds for BGT.

There are several problems that relate to the Pinwheel Scheduling problem. The continuous version of PS, in which jobs can be scheduled at any time, but time passes when one is traveling, is known as the Periodic Latency problem \cite{CSW11}. Moreover, the problem also arises as a subproblem is some variants of inventory routing \cite{Balleretal2019} and replenishment problems \cite{Bosmanetal2018}.

An interesting aspect of a PS-instance is the density. The density of instance $A=\{p_1,\ldots,p_n\}$ is defined as $\rho(A)=\sum_i 1/p_i$. Clearly, if $\rho(A)>1$, instance $A$ is not schedulable. Holte et al. \cite{Holteetal1989} showed that if $p_i$ divides $p_j$ whenever $i<j$, and $\rho(A)\leq 1$, then $A$ is schedulable. Consequently, if $\rho(A)\leq 1/2$, $A$ can be scheduled by rounding each period down to its nearest power of 2. Later, more elaborate ways of rounding the periods were considered \cite{ChanChin1992,ChanChin1993,FishburnLagarias2002}. In the first paper by Chan and Chin \cite{ChanChin1992}, they designed an algorithm that was able to round any instance $A$ with $\rho(A)\leq 2/3$ to a schedulable instance. They also considered a simpler algorithm such that any instance with density $7/12$ is rounded to a schedulable instance. We will use this algorithm for our main result. They also observed that instance $A=\{2,3,M\}$ cannot be scheduled for any value of $M$. Hence, there exists an instance with density above $5/6$ that is not schedulable. They conjectured that any instance $A$ with $\rho(A)\leq 5/6$ is schedulable. In the second paper by Chan and Chin \cite{ChanChin1993}, they improved the achievable density bound to $7/10$. Finally, Fishburn and Lagarias \cite{FishburnLagarias2002} showed that any instance with density at most $3/4$ can be scheduled.

In the next section, we will describe the 2-approximation algorithm designed by \cite{GKLLMR17}. In Section \ref{sec:main} we will show how we can use the algorithm by Chan and Chin \cite{ChanChin1992} to improve the approximation guarantee to $12/7$. This sections will focus on the approximation guarantee. In Section \ref{sec:poly}, we will discuss the running time of the algorithms.

\section{A $2$-approximation algorithm}
In this section, we will describe the approach taken by \cite{GKLLMR17} in order to obtain a 2-approximation algorithm. For this, define $H=\sum_i h_i$. As observed in \cite{GKLLMR17}, $H$ is a lower bound on the optimal value. Hence, if we find a solution with value $\alpha H$, we know that we are within a factor $\alpha$ of the optimal value.
Given an instance of BGT, i.e., $h_1,\ldots,h_n$, we define the following periods:
\[ p_i := \frac{2H}{h_i}. \]
The period $p_i$ can be interpreted as the (possibly fractional) number of days it takes bamboo $i$ to reach height $2H$. Since the $p_i$'s need not to be integral, we call $A=\{p_1,\ldots,p_n\}$ a pseudo-instance of the Pinwheel Scheduling problem. A feasible solution for a pseudo-instance of the Pinwheel Scheduling problem is a schedule assigning at most one job per day such that each job $i$ is scheduled at least once in any period of $\lfloor p_i \rfloor$ consecutive days. If we are able to find a feasible schedule for the created pseudo-instance, we know that no bamboo will ever exceed height $2H$. Hence, we will obtain a 2-approximation algorithm.

The density of the created pseudo-instance $A$ is equal to 
\[ \rho(A)=\sum_i \frac{1}{p_i} = \sum_i \frac{h_i}{2H} = \frac{1}{2}. \]
If we round every period down to its nearest power of 2, we get an instance of the Pinwheel Scheduling problem with density at most 1. So, the result from \cite{Holteetal1989} ensures there is a schedule for this instance of the Pinwheel Scheduling problem. In Section \ref{sec:poly}, it is shown how to obtain a representation of a feasible schedule in polynomial time. Since periods are rounded down, the resulting schedule can also be used as a feasible solution for the pseudo-instance $A$. This concludes the proof.

\section{A $12/7$-approximation algorithm}
\label{sec:main}

Our algorithm uses the same approach as the 2-approximation by \cite{GKLLMR17}. Given an instance of BGT, we create a pseudo-instance of the Pinwheel Scheduling problem by defining periods:
\[ p_i := \frac{12H}{7h_i}. \]
The pseudo-instance $A=\{p_1,\ldots,p_n\}$ has density $\rho(A)=7/12$. As noted in the introduction, Chan and Chin \cite{ChanChin1992} designed an algorithm that is able to round any instance of PS with density at most $7/12$ to a schedulable instance. As it turns out, this algorithm is even able to round any pseudo-instance of PS with density at most $7/12$ to a schedulable instance of PS (Theorem \ref{thm:cc}). In Section \ref{sec:poly}, we show how to obtain a representation of a feasible schedule in polynomial time. Hence, we obtain a $12/7$-approximation algorithm for BGT.

To explain the algorithm by \cite{ChanChin1992}, we need to define the operation specialization. If we specialize a pseudo-instance of PS with respect to $\{x\}$, it means that we round every period down to its nearest integer in $\{x,2x,4x,\ldots,x2^j,\ldots\}$. For example, in the algorithm by \cite{GKLLMR17}, the pseudo-instance is specialized with respect to $\{2\}$. We can also specialize a pseudo-instance with respect to multiple integers. If we specialize a pseudo-instance of PS with respect to $\{x,y\}$, it means that we round every period down to its nearest integer in 
\[ \{x,2x,4x,\ldots,x2^j,\ldots\} \cup \{y,2y,4y,\ldots,y2^j,\ldots\}. \]
The algorithm by Chan and Chin \cite{ChanChin1992} uses specialization with respect to $\{2,3\}$. We use the following notation. Let $A = A_2 \cup A_3$ be a pseudo-instance, where $A_2 = \{p_i \in A| 2(2^j) \leq p_i < 3(2^j), \text{for some integer }j \geq 0\}$ and $A_3 = \{p_i \in A|3(2^j) \leq p_i < 2(2^{j+1}), \text{for some integer } j \geq 0\}$. Let $B$ and $C$ be the specialization of $A_2$ and $A_3$ with respect to $\{2\}$ and $\{3\}$, respectively. Then, if $\lceil 2\rho(B)\rceil/2 + \lceil 3\rho(C)\rceil/3 \leq 1$, the instance is schedulable (Theorem 3.1.1 in \cite{ChanChin1992}). 

After specializing $A$ with respect to $\{2,3\}$, we may want to further adjust $B$ and $C$. For this, define $P\subseteq B$ and $Q \subseteq C$ such that
\begin{align*}
\rho(B) = \frac{r}{2} + \rho(P) \text{ with } 0 < \rho(P) < \frac{1}{2} \text{ for some integer } r \geq 0, \\
\rho(C) = \frac{s}{3} + \rho(Q) \text{ with } 0 < \rho(Q) < \frac{1}{3} \text{ for some integer } s \geq 0.
\end{align*}
If $\rho(P)+\rho(Q)=0$, no further action is taken. If $\rho(P)+\rho(Q)>0$, we will start our normalization procedure, which considers four cases. Here, treating set $X$ as one $z$ means that we will reserve space in the schedule as if the jobs in set $X$ were one $z$.
\begin{itemize}
	\item[(a)] Treat $P$ and $Q$ as one $3$. Here, we specialize $P$ with respect to $\{3\}$, and then move it to $C$. This will be done if $4\rho(P)/3+\rho(Q) \leq 1/3$.
 	\item[(b)] Treat $P$ and $Q$ as one $2$. Here, we specialize $Q$ with respect to $\{2\}$, and then move it to $B$. This will be done if $1/3 < 4\rho(P)/3+\rho(Q) \leq 2/3$ and $\rho(P)+3\rho(Q)/2 \leq 1/2$.
	\item[(c)] Treat $P$ and $Q$ as two $3$'s. Here, we specialize $P$ with respect to $\{3\}$, and then move it to $C$. This will be done if $1/3 < 4\rho(P)/3+\rho(Q) \leq 2/3$ and $\rho(P)+3\rho(Q)/2 > 1/2$.
	\item[(d)] Treat $P$ and $Q$ as a $2$ and a $3$. This will be done if $4\rho(P)/3+\rho(Q) > 2/3$.
\end{itemize}
We define $B'$ and $C'$ to be $B$ and $C$ after normalization, respectively. The following theorem states that if we specialize a pseudo-instance $A$ of PS with respect to $\{2,3\}$, and then apply normalization, we obtain a schedulable instance whenever $\rho(A)\leq 7/12$. The proof is actually the same as the one in \cite{ChanChin1992}, and is therefore given in the appendix.

\begin{theorem}
\label{thm:cc}
The algorithm of Chan and Chin \cite{ChanChin1992} obtains a schedulable PS-instance for any pseudo-instance $A$ with $\rho(A)\leq 7/12$. 
\end{theorem}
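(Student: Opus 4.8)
The plan is to follow the Chan--Chin proof essentially line by line, checking only that integrality of the periods of $A$ is never actually used. By Theorem 3.1.1 of \cite{ChanChin1992} --- which applies verbatim to the pair $(B',C')$, since after normalization $B'$ still consists of jobs with periods in $\{2,4,8,\dots\}$ and $C'$ of jobs with periods in $\{3,6,12,\dots\}$ --- the instance $B'\cup C'$ is schedulable as soon as
\[ \frac{\lceil 2\rho(B')\rceil}{2} + \frac{\lceil 3\rho(C')\rceil}{3} \le 1, \]
and since specialization and all four normalization steps only ever replace a period by a \emph{smaller} integer, any feasible schedule for $B'\cup C'$ is at once a feasible schedule for the pseudo-instance $A$. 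So the entire task is to verify the displayed inequality under the hypothesis $\rho(A)\le 7/12$. (We assume throughout that $p_i\ge 2$ for every $i$, which is what makes the specialization well defined.)

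The first thing I would establish is a \emph{master inequality}. If $p_i\in A_2$, then $2\cdot 2^j\le p_i<3\cdot 2^j$ for some $j\ge 0$ and $p_i$ is rounded down to $2\cdot 2^j$; the strict bound $p_i<3\cdot 2^j$ --- which holds for a fractional $p_i$ exactly as for an integral one, and this is the only place integrality could have mattered --- shows that $p_i$ contributes less than $\frac{3}{2}$ times as much to $\rho(B)$ as to $\rho(A_2)$. Likewise each $p_i\in A_3$ contributes less than $\frac{4}{3}$ times as much to $\rho(C)$ as to $\rho(A_3)$. Summing and using $\rho(A_2)+\rho(A_3)=\rho(A)\le\frac{7}{12}$ gives
\[ \frac{2}{3}\rho(B) + \frac{3}{4}\rho(C) < \rho(A) \le \frac{7}{12}, \qquad \text{i.e.} \qquad 8\rho(B) + 9\rho(C) < 7. \]
Writing $\rho(B)=\frac{r}{2}+\beta$ and $\rho(C)=\frac{s}{3}+\gamma$ with $r=\lfloor 2\rho(B)\rfloor$, $s=\lfloor 3\rho(C)\rfloor$, $0\le\beta<\frac{1}{2}$, $0\le\gamma<\frac{1}{3}$ (so that $\beta=\rho(P)$ and $\gamma=\rho(Q)$), this reads $4r+3s+8\beta+9\gamma<7$. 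I would also record the two facts that make the normalization computations exact: $\rho(P)<\frac{1}{2}$ forces every period in $P$ to be at least $4$, so $\{3\}$-specializing $P$ multiplies its density by exactly $\frac{4}{3}$; and $\rho(Q)<\frac{1}{3}$ forces every period in $Q$ to be at least $6$, so $\{2\}$-specializing $Q$ multiplies its density by exactly $\frac{3}{2}$.

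The second step is the case analysis. If $\beta+\gamma=0$ then $B'=B$, $C'=C$, the left-hand side of the target equals $\frac{r}{2}+\frac{s}{3}$, and $4r+3s<7$ already bounds it by $1$. Otherwise one runs through cases (a)--(d). In each case $\rho(B')$ and $\rho(C')$ are read off exactly from the description --- e.g.\ in (a), $\rho(B')=\frac{r}{2}$ and $\rho(C')=\frac{s}{3}+\gamma+\frac{4}{3}\beta$ --- and the defining inequality of the case bounds the relevant ceiling: $\gamma+\frac{4}{3}\beta\le\frac{1}{3}$ in (a) gives $\lceil 3\rho(C')\rceil\le s+1$; $\beta+\frac{3}{2}\gamma\le\frac{1}{2}$ in (b) gives $\lceil 2\rho(B')\rceil\le r+1$; $\gamma+\frac{4}{3}\beta\le\frac{2}{3}$ in (c) gives $\lceil 3\rho(C')\rceil\le s+2$; and in (d), where $P$ stays in $B$ and $Q$ in $C$, one gets $\lceil 2\rho(B')\rceil=r+1$ and $\lceil 3\rho(C')\rceil=s+1$. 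The target then reduces to a purely integral inequality --- $3r+2s\le 4$ in (a), $3r+2s\le 3$ in (b), $3r+2s\le 2$ in (c), $3r+2s\le 1$ in (d) --- and each of these follows from $4r+3s+8\beta+9\gamma<7$ after substituting the lower bound on $8\beta+9\gamma$ that the case hypothesis forces: (d) forces $8\beta+9\gamma>4$, whence $4r+3s\le 2$ and so $r=s=0$; (c) forces $8\beta+9\gamma>3$; (b) forces $8\beta+9\gamma>2$; and (a) uses only $\beta+\gamma>0$. The boundary sub-cases $\beta=0$ and $\gamma=0$ need a glance, since they exclude some of (a)--(d) (for instance $\gamma=0$ is incompatible with (c) and (d)), but each falls into an admissible case and is dispatched the same way.

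The only genuinely laborious part is this four-way check, and the thing that needs care there is the bookkeeping: translating each of the phrases ``treat $P$ and $Q$ as one $3$'', ``as one $2$'', ``as two $3$'s'', ``as a $2$ and a $3$'' into the correct additive change to $\rho(B)$ and $\rho(C)$, reading off $\lceil 2\rho(B')\rceil$ and $\lceil 3\rho(C')\rceil$ correctly from the case hypothesis, and confirming that the slack the hypothesis injects into $4r+3s+8\beta+9\gamma<7$ is exactly enough to pin $r$ and $s$ down to the required size. Moving from integral instances to pseudo-instances introduces nothing new, as anticipated by the remark preceding the theorem: every inequality used about the rounding is of the form ``$p_i$ lies strictly below the top of its dyadic window'', which is precisely how $A_2$ and $A_3$ are defined and does not care whether $p_i$ is an integer.
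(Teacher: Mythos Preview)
Your proposal is correct and follows essentially the same approach as the paper's proof: establish the master inequality $\frac{2}{3}\rho(B)+\frac{3}{4}\rho(C)<\frac{7}{12}$ (your $8\rho(B)+9\rho(C)<7$ is just this scaled by $12$), then run through the four normalization cases, in each one using the case hypothesis to push enough slack into the master inequality to pin down the admissible $(r,s)$ and verify $\lceil 2\rho(B')\rceil/2+\lceil 3\rho(C')\rceil/3\le 1$. Your derivation of the lower bound on $8\beta+9\gamma$ in case~(b) is slightly more direct than the paper's (you multiply $4\beta/3+\gamma>1/3$ by $6$ and use $\gamma\ge 0$ to get $8\beta+9\gamma>2$, whereas the paper routes through $\rho(P)<1/2$ to get only $>1$), but the conclusion is the same and the overall structure is identical.
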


\begin{corollary}
There is a $12/7$-approximation algorithm for the discrete Bamboo Garden Trimming problem.
\end{corollary}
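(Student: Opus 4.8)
The plan is to run exactly the template of the $2$-approximation in Section~\ref{sec:main}, with the factor $2$ replaced by $12/7$ and the power-of-two rounding of \cite{Holteetal1989} replaced by the specialization-plus-normalization procedure of Chan and Chin. Set $H=\sum_i h_i$ and recall from Section~2 that $H$ is a lower bound on the optimal BGT value. Form the pseudo-instance $A=\{p_1,\dots,p_n\}$ of PS with $p_i:=12H/(7h_i)$; then
\[ \rho(A)=\sum_i \frac{1}{p_i}=\sum_i \frac{7h_i}{12H}=\frac{7}{12}, \]
so Theorem~\ref{thm:cc} applies and produces a schedulable PS-instance $A'$, for which (via Theorem~3.1.1 of \cite{ChanChin1992} together with the normalization cases (a)--(d)) an explicit periodic pinwheel schedule exists.

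Next I would argue that a feasible pinwheel schedule for $A'$ is, in particular, a feasible schedule for the fractional pseudo-instance $A$. This is where one must check that every operation in the algorithm only \emph{shrinks} periods: specialization with respect to $\{2,3\}$ replaces each $p_i$ by the largest number of the form $2\cdot 2^j$ or $3\cdot 2^j$ not exceeding $p_i$, hence by an integer $p_i'\le \lfloor p_i\rfloor$; and each of the four normalization steps (moving $P$ into $C$ after specializing with respect to $\{3\}$, or moving $Q$ into $B$ after specializing with respect to $\{2\}$) again only rounds periods down to integers. Therefore each period $p_i'$ of $A'$ satisfies $p_i'\le \lfloor p_i\rfloor$, and a schedule in which job $i$ occurs at least once in every window of $p_i'$ consecutive days a fortiori occurs at least once in every window of $\lfloor p_i\rfloor$ consecutive days.

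Finally I would translate this schedule into a trimming schedule by reading ``job $i$ is scheduled on day $d$'' as ``cut bamboo $i$ at the end of day $d$''. Applying the pinwheel property to the windows $\{1,\dots,\lfloor p_i\rfloor\}$ and then to the windows starting just after each cut of bamboo $i$, consecutive cuts of bamboo $i$ (and the first cut) are at most $\lfloor p_i\rfloor$ days apart, so the height of bamboo $i$ just before any cut is at most
\[ h_i\,\lfloor p_i\rfloor \;\le\; h_i\,p_i \;=\; \frac{12H}{7}. \]
Hence no bamboo ever exceeds height $12H/7\le \tfrac{12}{7}\,\mathrm{OPT}$, which gives the approximation ratio; polynomiality of producing the numbers $o_i,t_i$ is deferred to Section~\ref{sec:poly}.

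Since Theorem~\ref{thm:cc} already carries the combinatorial weight, the only genuine point to verify in the corollary is the monotonicity claim of the second paragraph — that neither specialization nor any normalization case can ever increase a period — so that feasibility for the integral instance $A'$ transfers back to the fractional pseudo-instance $A$; the remaining inequality $h_i\lfloor p_i\rfloor\le 12H/7$ and the lower bound $H\le\mathrm{OPT}$ are routine.
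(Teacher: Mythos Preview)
Your proposal is correct and follows exactly the paper's own argument: form the pseudo-instance with $p_i=12H/(7h_i)$, invoke Theorem~\ref{thm:cc} on the resulting density-$7/12$ pseudo-instance, and use that periods are only rounded down so the pinwheel schedule is feasible for $A$ and hence yields height at most $12H/7\le\tfrac{12}{7}\mathrm{OPT}$. The paper states this more tersely in the opening of Section~\ref{sec:main} and leaves the monotonicity of specialization/normalization implicit, whereas you spell it out; otherwise the two arguments coincide.
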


Next, we will illustrate that the approximation guarantee of $12/7$ is essentially the best we can hope for when our algorithm is based on a reduction to a pseudo-instance of PS. First, we give an example of a pseudo-instance of PS with density $7/12+\delta$, with $\delta>0$, that is not schedulable.

\begin{example}
\label{ex1}
Consider the pseudo-instance with $p_1=3-\epsilon$, $p_2=4-\epsilon$, and $p_3=M$, with $\epsilon>0$ and $M$ a large number. This pseudo-instance has density
\[ \frac{7}{12} + \delta \text{ with } \delta = \frac{\epsilon}{9-3\epsilon} + \frac{\epsilon}{16-4\epsilon} + \frac{1}{M}. \]
In order to obtain a feasible schedule for this pseudo-instance, we should round $p_1$ to 2. Similarly, we should round $p_2$ to an integer smaller than or equal to 3. In the most conservative rounding, we obtain the PS-instance $\{2,3,\lfloor M \rfloor\}$, which is known to be non-schedulable \cite{ChanChin1992}. Hence, our pseudo-instance is not schedulable.
\end{example}

The example above shows that if we reduce instances of BGT to a pseudo-instances of PS, and then only focus on schedulability of these pseudo-instances, we will not improve upon the guarantee of $12/7$. However, it might be possible that ``bad'' pseudo-instances like the one in Example \ref{ex1} are not encountered after the reduction. The next example shows that if we use $H$ as our lower bound, and a magnifying factor of $12/7-\eta$, to make the reduction from BGT to a pseudo-instance of PS, we will encounter the pseudo-instance of Example \ref{ex1}. This shows that if we use $H$ as our lower bound, algorithms based on a reduction to a pseudo-instance of PS will not give guarantees better than $12/7$.

\begin{example}
Consider the BGT-instance with $h_1=4$, $h_2=3$, and $h_3=\gamma$, with $0<\gamma<49\eta/(12-7\eta)$. Since $H=7+\gamma$, we obtain the pseudo-instance of PS with
\begin{align*}
p_1 &= \left( \frac{12}{7} - \eta \right)\left( \frac{7+\gamma}{4} \right) = 3 + \frac{3\gamma}{7} - \frac{(7+\gamma)\eta}{4} = 3 - \epsilon_1 \\
p_2 &= \left( \frac{12}{7} - \eta \right)\left( \frac{7+\gamma}{3} \right) = 4 + \frac{4\gamma}{7} - \frac{(7+\gamma)\eta}{3} = 4 - \epsilon_2 \\
p_3 &= \left( \frac{12}{7} - \eta \right)\left( \frac{7+\gamma}{\gamma} \right) = \frac{12}{\gamma} + \frac{12}{7} - \frac{(7+\gamma)\eta}{\gamma} = M
\end{align*}
\end{example}

We like to note that the chosen approach might be able to improve upon the current guarantee, if another lower bound is used to reduce to PS. For example, Della Croce \cite{DellaCroce2020} observed that, if $n\neq 1$, the optimal value is at least $2h_1$. Hence, we could use $\max\{2h_1,\sum_i h_i\}$ as a lower bound.

\section{Running time}
\label{sec:poly}
First observe that rounding the periods as is done by the algorithm from \cite{ChanChin1992} can be done in polynomial time. Here, we give an algorithm that, for instances obtained by applying the algorithm from \cite{ChanChin1992}, produces a representation of a feasible solution in polynomial time. For this, it is sufficient to show that such an algorithm exists for instances $A$ with the property that $p_i$ divides $p_j$ whenever $p_i\leq p_j$, and $\rho(A)\leq 1$. In \cite{Holteetal1989}, an algorithm for instances with this property is given. However, this algorithm runs in exponential time, since it constructs a schedule of length $2\prod_i p_i$. For instances with the aforementioned property, we can restrict ourselves to periodic schedules \cite{Holteetal1989}. As said in the introduction, these can be represented by an offset time $o_i$ and a time $t_i$ between two consecutive days on which job $i$ is scheduled, for each $i$. We will show that we can find these values in polynomial time. 

\begin{theorem}
If an instance $A$ satisfies the property that $p_i$ divides $p_j$ whenever $p_i\leq p_j$, and $\rho(A)\leq 1$, then a representation of a feasible schedule can be found in polynomial time.
\end{theorem}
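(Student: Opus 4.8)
The plan is to produce, for every job $i$, the period $t_i = p_i$ together with a suitable first occurrence $o_i \in \{0,\dots,p_i-1\}$, and to show that with the right choice of offsets the arithmetic progressions $\{o_i, o_i+p_i, o_i+2p_i,\dots\}$ are pairwise disjoint; then each job $i$ is scheduled exactly once in every window of $p_i$ consecutive days, so the resulting periodic schedule is feasible. Write the distinct period values as a divisibility chain $q_1 \mid q_2 \mid \dots \mid q_k$ (the hypothesis forces this, since whenever two periods are comparable the smaller divides the larger), and let $m_\ell$ be the number of jobs of period $q_\ell$, so $\sum_\ell m_\ell/q_\ell = \rho(A) \le 1$. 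Because $q_s \mid q_t$ for $s\le t$, two jobs collide on some day precisely when their offsets agree modulo the smaller of their two periods. So the whole task reduces to choosing offsets that avoid all such coincidences; that such a choice exists is already known \cite{Holteetal1989}, and the only real issue is doing it in polynomial time with a succinct output.

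I would run a greedy that processes the periods in increasing order, maintaining the set $F_\ell \subseteq \{0,\dots,q_\ell-1\}$ of residues modulo $q_\ell$ that are still \emph{free}, i.e.\ that do not agree, modulo the relevant smaller period, with any offset assigned so far. After assigning the $m_1$ jobs of period $q_1$ the offsets $0,1,\dots,m_1-1$, we have $F_1=\{m_1,\dots,q_1-1\}$. When we reach period $q_\ell$, every job placed so far has period dividing $q_{\ell-1}$, so a residue $r$ modulo $q_\ell$ is free iff $r\bmod q_{\ell-1}\in F_{\ell-1}$; hence the free residues modulo $q_\ell$ are $q_\ell/q_{\ell-1}$ translated copies of $F_{\ell-1}$, a set of size $q_\ell\bigl(1-\sum_{t<\ell}m_t/q_t\bigr)\ge m_\ell$ by the density bound. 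I assign the $m_\ell$ jobs of period $q_\ell$ the $m_\ell$ smallest of these free residues and let $F_\ell$ be what remains. By construction no two assigned progressions collide and $o_i<q_\ell=p_i$, so this yields a feasible schedule; correctness of the greedy is immediate, and ``enough free residues at level $\ell$'' is just $\sum_{t\le\ell}m_t/q_t\le\rho(A)\le 1$.

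The substance of the argument is that this can be carried out without ever listing the (possibly exponentially large) sets $F_\ell$. The key structural observation I would establish is that $F_\ell$ is always \emph{block-structured}: relative to the partition of $\{0,\dots,q_\ell-1\}$ into blocks of length $q_{\ell-1}$, a prefix of blocks is empty, exactly one block contains a copy of $F_{\ell-1}$ with its smallest few elements deleted, and every later block contains a full copy of $F_{\ell-1}$ — and $F_{\ell-1}$ itself has this same form recursively. One checks that this shape is preserved by the operation ``delete the $m_\ell$ smallest free residues'' (the deletions swallow whole leading blocks and then bite into one more block, recursing one level down). Consequently $F_\ell$ is determined implicitly by the numbers $q_1,m_1,\dots,q_\ell,m_\ell$, and the $j$-th smallest free residue modulo $q_\ell$ can be computed by a recursion of depth at most $k\le n$ performing only a constant number of integer divisions and comparisons on numbers of polynomial bit length per level. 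Computing all $n$ offsets this way, together with the trivially available $t_i=p_i$, gives the representation in polynomial time.

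The main obstacle is exactly this succinctness point: since the periods may be exponential in the input size, the free-residue sets cannot be enumerated, so one must recognize and maintain the recursive block structure — and verify it survives the greedy deletions — in order to answer the ``$j$-th smallest free residue'' queries efficiently. Everything else (feasibility of the greedy, and the counting inequality guaranteeing room at each level) is routine once the divisibility chain structure is in hand.
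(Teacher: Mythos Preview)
Your proposal is correct. The paper takes a different route: a top-down recursion that partitions the jobs into at most $p_1$ groups, each of density at most $1/p_1$ (jobs of period $p_1$ necessarily become singletons), assigns group $S_j$ the residue class $j\bmod p_1$, scales all periods in $S_j$ down by $p_1$, and recurses on each group; the offset of a job is then recovered from the one-line relation $\sigma(i,\ell)=j+(\sigma_j(i,\ell)-1)\,p_1$, and polynomial time follows because every recursion step peels off at least one singleton, so the recursion tree has $O(n)$ nodes. You instead run a bottom-up greedy over the distinct period levels $q_1\mid\cdots\mid q_k$, handing each level the smallest still-free offsets and computing them via a rank query into the recursively block-structured free set. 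The two constructions are close cousins --- unwinding your rank recursion by one level is essentially the paper's ``reduce modulo the smallest period and recurse'' --- but the packaging differs: the paper's argument is shorter and more conceptual (though it leaves the partitioning step and its cost implicit), whereas yours makes the offset formula and the bit-complexity bound explicit and directly confronts the fact that the periods can be exponentially large, so the free-residue sets must be handled symbolically rather than enumerated.
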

\begin{proof}
First, we show that a feasible schedule exists. Then, we show that this proof leads to a polynomial time algorithm. For proving existence of a feasible schedule, we use induction on the size of the instance. By our assumption on the periods, we can partition an instance $A$ into $k\leq p_1$ subsets $S_1,\ldots,S_k$ with $\sum_{i\in S_j} 1/p_i \leq 1/p_1$ for all $j$. Moreover, $p_i/p_1\in\mathbb{N}$ for all $i$.

Create a new instance $A'_j$ containing the jobs from $S_j$ and set the periods to $p'_i = p_i/p_1$ for $i \in S_j$. Clearly, $\sum_i 1/p'_i\leq 1$ holds for this instance. Suppose there is a feasible schedule for $A'_j$, and let $\sigma_j(i,\ell)$ denote the day on which job $i$ is scheduled for the $\ell$th time. Now, in the original instance, schedule job $i\in S_j$ for the $\ell$th time (denoted by $\sigma(i,\ell)$) on day:
\begin{equation}
\label{eq1}
\sigma(i,\ell) = j + (\sigma_j(i,\ell)-1)\cdot p_1.
\end{equation}

To see that this schedule is feasible, note that clients from set $S_j$ are only scheduled on days $\tau$ for which $\tau \equiv j \mod p_1$, and therefore there are no two clients scheduled on the same day. Furthermore, the time between two consecutive days on which a job is scheduled is exactly $p_1$ times as long as in the schedule for $B'_j$. Therefore instance $A$ will have a feasible schedule if the subinstances $A'_j$ are schedulable. As the inductive hypothesis trivially holds for instances with one client, the result follows.

The approach above can be used to find $t_i$ and $o_i$ in polynomial time. Clearly, $t_i$ is equal to $p_i$. To obtain the values of $o_i$ for all $i$, we repeatedly partition the instance as is done above, until only singletons are left. After partitioning, each subinstance is scaled down by a factor $p_1$. Now, the value of $o_i$ can be found by using equation \ref{eq1} recursively. Since in each step, we create at least one subset that contains only one job, this procedure will run in polynomial time.
\end{proof}

Finally, we will show that the algorithm above can be used to create a representation of a feasible schedule for instances produced by the algorithm of \cite{ChanChin1992}. In each case, the values of $t_i$ are equal to the periods found by the algorithm. In case our instance is only specialized with respect to either $\{2\}$ or $\{3\}$, we can directly apply the algorithm above. In case our instance is specialized with respect to $\{2,3\}$, we will assign the jobs in $B'$ to the odd days, and the jobs in $C'$ to the even days. If $B'$ consists of a single $2$, then this job is scheduled on every odd day. Otherwise, the periods are divided by 2, the algorithm above is applied, and the resulting schedule is stretched by a factor 2. A similar algorithm works for scheduling the jobs in $C'$.

\section{Conclusion}
In this paper, we presented a $12/7$-approximation algorithm for the discrete Bamboo Garden Trimming problem. This improves upon the $1.888$-approximation by Della Croce \cite{DellaCroce2020}. The algorithm uses a reduction to the Pinwheel Scheduling problem, similar to the algorithm designed by Gasieniec et al. \cite{GKLLMR17}. We also gave examples which show that the current bound is essentially the best we can hope for if we restrict ourselves to this approach.

For future research, it would be interesting to see if other approaches can improve upon the guarantee of $12/7$. It might also be interesting to study lower bounds on approximability. Moreover, it would be great if one is able to prove that either the discrete Bamboo Garden Trimming problem or the Pinwheel Scheduling problem is complete for some class, e.g., PSPACE-complete. 

\paragraph{Acknowledgements} We would like to thank Thomas Bosman for useful discussions on the computational efficiency of algorithms for the Pinwheel Scheduling problem.

\bibliography{BibBamboo}
\bibliographystyle{plain}

\appendix
\section{Proof of Theorem \ref{thm:cc}}

Using the definitions from Section \ref{sec:main}, we have
\begin{align*}
\rho(A_2) > \frac{2\rho(B)}{3} = \frac{r}{3} + \frac{2\rho(P)}{3}, \\
\rho(A_3) > \frac{3\rho(C)}{4} = \frac{s}{4} + \frac{3\rho(P)}{4}.
\end{align*}
Hence, we get
\[ \frac{r}{3} + \frac{s}{4} + \frac{2\rho(P)}{3} + \frac{3\rho(Q)}{4} < \rho(A_2) + \rho(A_3) = \rho(A) \leq \frac{7}{12}. \]
Hence, we only have combinations of $(r,s)$ in
\[ \{(0,0), (0,1), (0,2), (1,0), (1,1)\}. \]

We will now check that in each of the cases that the algorithm considers, we have $y = \lceil 2\rho(B')\rceil/2 + \lceil 3\rho(C')\rceil/3 \leq 1$.

If $\rho(P) + \rho(Q) = 0$, we have $\rho(B') = r/2$, $\rho(C') = s/3$, and $y = r/2 + s/3$. As $y < 1$ for all possible cases of $r$ and $s$, the instance can be scheduled.
If $\rho(P) + \rho(Q) > 0$, we consider the following cases.

\textit{Case (a)} : treat $P$ and $Q$ as a single $3$. After normalization, $\rho(B') = r/2$ and $s/3 < \rho(C') \leq (s + 1)/3$. Thus, $y = r/2 + (s + 1)/3$. Since $r/3 + s/4 < 7/12$, we only have to consider the following cases
\[ (r, s) \in \{(0, 0), (0, 1), (0, 2), (1, 0)\}, \]
and, in all cases, $y < 1$.

\textit{Case (b)}: treat $P$ and $Q$ as a single $2$. After normalization, $r/2 < \rho(B')\leq(r + 1)/2$ and $\rho(C') = s/3$. Thus, $y = (r + 1)/2 + s/3$. From the conditions for case (b), $4\rho(P)/3 + \rho(Q) > \frac{1}{3}$. With $\rho(P) < 1/2$, we have
\[ \frac{2\rho(P)}{3} + \frac{3\rho(Q)}{4} = \frac{3}{4}\left(\frac{4\rho(P)}{3} + \rho(Q)\right) - \frac{\rho(P)}{3} > \frac{3}{4}\cdot\frac{1}{3} - \frac{1}{3}\cdot\frac{1}{2} = \frac{1}{12}. \]
Thus, 
\begin{align*}
\frac{r}{3} + \frac{s}{4} + \frac{1}{12} < \frac{7}{12} \\ 
\frac{r}{3} + \frac{s}{4} < \frac{1}{2}. 
\end{align*}
The only possible cases for $(r, s)$ are $(0, 0)$, $(0, 1)$, and $(1, 0)$, and this gives $y \leq 1$.

\textit{Case (c)}: treat $P$ and $Q$ as two $3$'s. After normalization, $\rho(B')= r/2$ and $(s + 1)/3 < \rho(C') \leq (s + 2)/3$. Thus, $y = r/2 + (s + 2)/3$. From the conditions for this case, $\rho(P) + 3\rho(Q)/2 > 1/2$. With $\rho(Q) < 1/3$, we have
\[ \frac{2\rho(P)}{3} + \frac{3\rho(Q)}{4} = \frac{2}{3}\left(\rho(P) + \frac{3\rho(Q)}{2}\right) - \frac{\rho(Q)}{4} > \frac{2}{3}\cdot\frac{1}{2} - \frac{1}{4}\cdot\frac{1}{3} = \frac{1}{4}. \]
Thus, 
\begin{align*}
\frac{r}{3} + \frac{s}{4} + \frac{1}{4} < \frac{7}{12} \\ 
\frac{r}{3} + \frac{s}{4} < \frac{1}{3}. 
\end{align*}
The only possible cases for $(r, s)$ are $(0, 0)$ and $(0, 1)$, and this gives $y < 1$.

\textit{Case (d)} : treat $P$ and $Q$ as a $2$ and a $3$. After normalization, $r/2 < \rho(B') \leq (r + 1)/2$ and $s/3 < \rho(C') \leq (s + 1)/3$. Thus, $y = (r + 1)/2 + (s + 1)/3$. From the conditions for this case, $4\rho(P)/3 + \rho(Q) > 2/3$. With $\rho(P) < 1/2$, we have
\[ \frac{2\rho(P)}{3} + \frac{3\rho(Q)}{4} = \frac{3}{4}\left(\frac{4\rho(P)}{3} + \rho(Q)\right) - \frac{\rho(P)}{3} > \frac{3}{4}\cdot\frac{2}{3} - \frac{1}{3}\cdot\frac{1}{2} = \frac{1}{3}. \]
Thus,
\begin{align*}
\frac{r}{3} + \frac{s}{4} + \frac{1}{3} < \frac{7}{12} \\ 
\frac{r}{3} + \frac{s}{4} < \frac{1}{4}. 
\end{align*}
The only possible case for $(r, s)$ is $(0, 0)$ and this gives $y < 1$. \qed

\end{document}